\newcommand{\EE}{\mbox{E}}
\newcommand{\PP}{\mbox{P}}
\theoremstyle{plain}
\theoremstyle{definition}
\newtheorem{definition}{Definition}
\theoremstyle{definition}
\newtheorem*{condition*}{Condition}
\theoremstyle{definition}
\newtheorem{Remarks}{Remark}
\theoremstyle{plain}
\theoremstyle{plain}
\newtheorem{proposition}{Proposition}
\theoremstyle{plain}
\newtheorem{claim}{Claim}
\thanks{Inria Sophia Antipolis, France. {Email: \tt k.avrachenkov@sophia.inria.fr}}%
\thanks{Institute of Control Sciences, Russian Academy of Sciences, Moscow, Russia. {Email: \tt markovic@ipu.rssi.ru}}  
\thanks{Inria Sophia Antipolis, France. {Email: \tt jithin.sreedharan@inria.fr}}\thanks{Corresponding author}
\begin{document}
\RRNo{8578}
\makeRR   % cas d'un rapport de recherche
%% \makeRT % cas d'un rapport technique.
%% a partir d'ici, chacun fait comme il le souhaite
%\tableofcontents

\section{Introduction}
Data from real complex networks shows that correlations exist in various forms, for instance the existence of social relationships and interests in social networks. Degree correlations between neighbors, correlations in income, followers of users and number of likes of specific pages in social networks are some examples, to name a few. These kind of correlations have several implications in network structure, for example, degree-degree correlations manifests itself in assortativity or disassortativity of the network \cite{BBV08}.

We consider very large complex networks where it is impractical to have a complete picture a priori. Crawling or sampling techniques are in practice to explore such networks by making use of API calls or HTML scrapping. We look into randomized sampling techniques which generate stationary samples. As an example, random walk based algorithms are in use in many cases because of several advantages offered by them \cite{ART10,Brin1998}.

We focus on the extremal properties in the correlated and stationary sequence of characteristics of interest $X_1,\ldots,X_n$ which is a function of the node sequence, the one actually generated by sampling algorithms. The characteristics of interest, for instance, can be node degrees, node income, number of followers of the node in OSN etc. Among the properties, clusters of exceedances of such sequences over high thresholds are studied in particular. The cluster of exceedances is determined as the consecutive exceedances of $\{X_n\}$ over the threshold $\{u_n\}$ between two consecutive non-exceedances \cite{Ferro,Markovich2014}. It is important to investigate stochastic nature of extremes since it allows us to disseminate advertisement or collect opinions more effectively within the clusters.

The dependence structure of sampled sequence exceeding sufficiently high thresholds is measured using a parameter called extremal index (EI), $\theta$ in Extremal Value Theory. It is defined as follows.
\begin{definition}\cite[p.\ 53]{Leadbetter}
\label{Def-1}
The stationary sequence  $\{X_n\}_{n\ge 1}$, with $F$ as the marginal distribution function and $M_n=\max\{X_1,..., X_n\}$, is said to have the extremal index
$\theta\in[0,1]$ if
for each $0<\tau <\infty$ there is a sequence of real numbers (thresholds) $u_n=u_n(\tau)$ such that
\begin{eqnarray}
\lim_{n\to\infty}n(1-F(u_n))&=&\tau \mbox{ and}  \label{eq:CCDF_condn_theta} \\
\lim_{n\to\infty}\PP\{M_n\le u_n\}&=&e^{-\tau\theta}. \nonumber
\end{eqnarray}
\end{definition}
%-------------NM------------------------------------
The maxima $M_n$ is related to EI more clearly as  \cite[p.\ 381]{Beirlant}$^{(\textrm{a})}$
\begin{eqnarray}
\label{eq:max_ei_reln}
\PP\{M_n\le u_n\}&=& F^{n\theta}(u_n)+o(1).
\end{eqnarray}
When $\{X_n\}_{n\ge 1}$ is i.i.d. (for instance uniform independent node sampling), $\theta=1$ and point processes of exceedances over threshold $u_n$ converges weakly to homogeneous Poisson process \cite[Chapter 5]{Beirlant}. But when $0\leq \theta <1$, point processes of exceedances converges weakly to compound Poisson process and this implies that exceedances of high threshold values $u_n$ tend to occur in clusters for dependent data \cite[Chapter 10]{Beirlant}.

EI has many useful interpretations and applications like
\begin{itemize}
\item Finding distribution of order statistics of the sampled sequence. These can be used to find quantiles and predicts the $k$th largest value which arise with a certain probability. Specifically for the distribution of maxima, \eqref{eq:max_ei_reln} is available and the quantile of maxima is proportional to EI. Hence in case of samples with lower EI, lower values of maxima can be expected. When sampled sequence is the sequence of node degrees, these give many useful results.

\item Close relation of extremal index to the distribution and expectation of the size of clusters of exceedances.

\item First hitting time of the sampled sequence to $(u_n, \infty)$ is related to EI. Thus in case of applications where the aim is to detect large values of samples quickly, without actually employing sampling (which might be very costly), we can compare different sampling procedures by EI: smaller EI leads to longer searching of the first hitting time.
\end{itemize}
These interpretations  are explained later in the paper. The network topology determines the stationary distribution of the characteristics of interest under a sampling technique and is reflected on the EI. This indicates that different sampling algorithms may have different EI.

\subsection*{Our contributions}
The main contributions in this work are as follows. We associated Extremal Value Theory of stationary sequences to
sampling of large complex networks and we study the extremal and clustering properties of the sampling process due to correlations. In order to facilitate a painless future study of correlations and clusters of  samples in large networks, we propose to abstract the extremal properties into a single and handy parameter, EI. For any general stationary samples meeting two mixing conditions, we find that knowledge of bivariate distribution or bivariate copula is sufficient to compute EI analytically and thereby deriving many extremal properties. Several useful applications of EI (first hitting time, order statistics and mean cluster size) to analyse large graphs, known only through sampled sequences, are proposed. Degree correlations are explained in detail with a random graph model for which joint degree correlations exist for neighbor nodes. Three different random walk based algorithms that are widely discussed in literature (see \cite{ART10} and the references therein), are then revised for degree state space and EI is calculated when the joint degree correlation is bivariate Pareto distributed. We establish a general lower bound for EI in PageRank processes irrespective of the degree correlation model. Finally two estimation techniques of EI are provided and EI is numerically computed for a synthetic graph with neighbour degrees correlated and for two real networks (Enron email network and DBLP network). 

The paper is organized as follows. In Section \ref{sec:calcn_EI}, methods to derive EI are presented. Section \ref{sec:deg_corlns} considers the case of degree correlations. In Section \ref{sec:graph} the graph model and correlated graph generation technique are presented. Section \ref{sec:desc_rand_walks} explains the different types of random walks studied and derives associated transition kernels and joint degree distributions. EI is calculated for different sampling techniques later in Section \ref{sec:deg_calcn_ei}. In Section \ref{sec:applcn_ei} we provide several applications of extremal index in graph sampling techniques. In Section \ref{sec:simulations} we estimate extremal index and perform numerical comparisons. Finally Section \ref{sec:conclu} concludes the paper.

A shorter version of this submission has been appeared in \cite{Jithin_2014}.

\section{Calculation of Extremal Index (EI)}
\label{sec:calcn_EI}
We consider networks represented by an undirected graph $G$ with $N$ vertices and $M$ edges. Since the networks under consideration
%-------------NM------------------------------------
%is
are
%-------------NM------------------------------------
 huge, we assume it is impossible to describe them completely, i.e., no adjacency matrix beforehand. Assume any randomized sampling procedure is employed and let the sampled sequence $\{X_i\}$ be any general sequence.
 %, for e.g., degrees correspond to the nodes, income or number of followers of the node or user in case of social network.

This section explains a way to calculate extremal index from the bivariate distribution if the sampled sequence admits two mixing conditions.
\begin{condition*}[$D(u_n)$]
\label{cond:d}
\begin{multline*}
\hspace*{-0.4 cm}\Big|\PP(X_{i_1}\leq u_n,\ldots,X_{i_p}\leq u_n,X_{j_1}\leq u_n,\ldots,X_{j_q}\leq u_n )\\
-\PP(X_{i_1}\leq u_n,\ldots,X_{i_p}\leq u_n)\PP(X_{j_1}\leq u_n,\ldots,X_{j_q}\leq u_n )\Big|\leq \alpha_{n,l_n},
\end{multline*}
where $\alpha_{n,l_n} \to 0$ for some sequence $l_n=o(n)$ as $n \to \infty$, for any integers $i_1\leq \ldots <i_p<j_1<\ldots\leq j_q$ with $j_1-i_p>l_n$.
\end{condition*}

\begin{condition*}[$D''(u_n)$]
\label{cond:d''}
\begin{equation*}
\lim_{n \to \infty} \Big \{\sum_{j=2}^{r_n}\PP(X_j \leq u_n < X_{j+1}|X_1>u_n) \Big\}= 0,
\end{equation*}
where $(n/r_n)\alpha_{n,l_n} \to 0$ and $l_n/r_n \to 0$ with $\alpha_{n,l_n}$, $l_n$ as in Condition $D(u_n)$ and $r_n$ as $o(n)$.
%. and $u_n$ satisfies \eqref{eq:CCDF_condn_theta}.
\end{condition*}
%
%\begin{condition*}[$AIM(u_n)$]
%\begin{multline*}
%\max_{\substack{i\geq q_n, j \geq q_n\\ i+q_n+j \leq n}} \big | \PP (M_i\leq u_n, M_{i+q_n,i+q_n+j} \leq u_n) \\ 
%-P(M_i \leq u_n) P(M_j \leq u_n)  \big | \to 0,
%\end{multline*}
%where $\{q_n\}$ is a sequence of positive integers such that $q_n=o(n)$.
%\end{condition*}

Let $C(u,v)$ is the bivariate Copula \cite{Nelsen2007} ($[0,1]^2 \to [0,1]$) and $C'$ is its G\^{a}teaux derivative along the direction $(1,1)$. Using Sklar's theorem \cite[p.\ 18]{Nelsen2007}, with $F$ as the marginal stationary distribution function of the sampling process,
\[C(u,u)=\PP(X_1\leq F^{-1}(u), X_2\leq F^{-1}(u)).\]
$F^{-1}$ denotes the inverse function of $F$%\footnote{$F^{k}(.)$ indicates $k$th power of $F(.)$ throughout the paper except when $k=-1$ where it denotes the inverse function.}
. This representation is unique if the stationary distribution $F(x)$ is continuous. 

\begin{proposition}
\label{prop:theta}
If the sampled sequence is stationary and satisfies conditions $D(u_n)$ and $D''(u_n)$, then extremal index is given by
\begin{equation}
\label{eq:theta_my_expn}
\theta=C'(1,1)-1,
\end{equation}
and $0\leq \theta \leq 1$.
\end{proposition}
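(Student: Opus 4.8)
The plan is to use the known characterization of the extremal index under the conditions $D(u_n)$ and $D''(u_n)$, which reduces $\theta$ to a limiting conditional probability of a single non-exceedance, and then translate that limit into the Gâteaux derivative of the copula. Concretely, a classical result (Leadbetter--Nandagopalan; see also Chernick--Hsing--McCormick) states that under $D(u_n)$ and $D''(u_n)$ the extremal index exists and equals
\begin{equation*}
\theta=\lim_{n\to\infty}\PP(X_2\le u_n\mid X_1>u_n),
\end{equation*}
provided this limit exists, where $u_n=u_n(\tau)$ is the threshold sequence of Definition~\ref{Def-1}. The first step is therefore to recall or cite this characterization and to verify that our standing assumptions are exactly the hypotheses it needs.

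**Next I would rewrite the conditional probability in terms of the joint and marginal survival/distribution functions.** Writing $\PP(X_2\le u_n\mid X_1>u_n)=\dfrac{\PP(X_2\le u_n)-\PP(X_1\le u_n,X_2\le u_n)}{1-\PP(X_1\le u_n)}=\dfrac{F(u_n)-\PP(X_1\le u_n,X_2\le u_n)}{1-F(u_n)}$, and then using Sklar's theorem in the form $\PP(X_1\le F^{-1}(u),X_2\le F^{-1}(u))=C(u,u)$ with $u=F(u_n)$, the conditional probability becomes
\begin{equation*}
\PP(X_2\le u_n\mid X_1>u_n)=\frac{F(u_n)-C\big(F(u_n),F(u_n)\big)}{1-F(u_n)}=1-\frac{C\big(F(u_n),F(u_n)\big)-\big(F(u_n)-(1-F(u_n))\big)}{1-F(u_n)}.
\end{equation*}
It is cleaner to write it as $1-\dfrac{u-C(u,u)}{1-u}$ with $u=F(u_n)$. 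Since $u_n\to x_F$ (the right endpoint of $F$) and $F(u_n)\to 1$ by the first condition in Definition~\ref{Def-1}, we let $u\uparrow 1$ and evaluate the limit of $\dfrac{u-C(u,u)}{1-u}=\dfrac{C(1,1)-C(u,u)-(1-u)}{1-u}$, using $C(1,1)=1$. The diagonal increment $C(1,1)-C(u,u)$ over the segment from $(u,u)$ to $(1,1)$, divided by its "length" parameter $1-u$, is precisely the one-sided directional (Gâteaux) derivative of $C$ at $(1,1)$ along $(1,1)$, i.e.\ $C'(1,1)=\lim_{h\downarrow 0}\frac{C(1,1)-C(1-h,1-h)}{h}$. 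Hence the limit of $\dfrac{u-C(u,u)}{1-u}$ equals $C'(1,1)-1$, and therefore
\begin{equation*}
\theta=1-\big(C'(1,1)-1\big)\ \text{?}
\end{equation*}
— here I must be careful with the sign: $u-C(u,u)=(C(1,1)-C(u,u))-(1-u)$, so $\frac{u-C(u,u)}{1-u}=\frac{C(1,1)-C(u,u)}{1-u}-1\to C'(1,1)-1$, giving $\PP(X_2\le u_n\mid X_1>u_n)\to 1-(C'(1,1)-1)=2-C'(1,1)$. That does not match the claim, so the intended reading must be that the characterization used here is $\theta=\lim\PP(X_2>u_n\mid X_1>u_n)$, or equivalently the paper's $C'$ is the derivative of $u\mapsto C(u,u)$ so that $C'(1,1)=\lim_{h\downarrow0}\frac{C(1,1)-C(1-h,1-h)}{h}$ and $\theta=\lim\PP(X_2\le u_n\mid X_1>u_n)$ with the complementary arrangement — I would reconcile the precise convention at the start and then the computation above yields $\theta=C'(1,1)-1$ cleanly.

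**The bounds $0\le\theta\le1$ then come for free** from the probabilistic representation: $\theta$ is the limit of a conditional probability, hence lies in $[0,1]$; equivalently, in copula terms, $C'(1,1)\in[1,2]$ because the map $u\mapsto C(u,u)$ is nondecreasing with slope at least that of the independence-minus term and at most that of the comonotone copula $C(u,u)=u$. **The main obstacle** I anticipate is purely bookkeeping: pinning down the exact sign/direction convention for the Gâteaux derivative $C'$ as the authors define it (derivative of $C$ along $(1,1)$ versus derivative of the diagonal section $\delta(u)=C(u,u)$), and confirming that the one-sided limit defining $C'(1,1)$ exists under conditions $D(u_n)$, $D''(u_n)$ — i.e.\ that the classical theorem's hypothesis "the limit exists" is automatically supplied here, or else must be added as a tacit regularity assumption on the copula. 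Once the convention is fixed, the proof is a two-line substitution of Sklar's theorem into the Leadbetter--Nandagopalan formula followed by recognizing the difference quotient.
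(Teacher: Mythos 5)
Your overall route is the same as the paper's: invoke the characterization $\theta=\lim_{n\to\infty}\PP(X_2\le u_n\mid X_1>u_n)$, valid under $D(u_n)$ and $D''(u_n)$ (Leadbetter--Nandagopalan, the paper's reference for this step), substitute Sklar's theorem, and identify the resulting difference quotient with the G\^{a}teaux derivative of $C$ at $(1,1)$ along $(1,1)$. The key limit you compute, $\lim_{u\uparrow 1}\frac{u-C(u,u)}{1-u}=C'(1,1)-1$, is exactly the paper's final step and is correct.

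However, there is a genuine algebra error that you flag but then resolve incorrectly. The conditional probability equals $\frac{u-C(u,u)}{1-u}$ with $u=F(u_n)$ \emph{itself}; it is not $1-\frac{u-C(u,u)}{1-u}$. The spurious ``$1-{}$'' enters at the line ``It is cleaner to write it as \dots'': your preceding identity $1-\frac{C(u,u)-(2u-1)}{1-u}$ does simplify to $\frac{u-C(u,u)}{1-u}$, but you then treated the conditional probability as $1$ minus that quantity. This leads you to $\theta=2-C'(1,1)$, after which you speculate that the intended characterization might be $\theta=\lim\PP(X_2>u_n\mid X_1>u_n)$. That alternative is wrong: for the independence copula $C(u,u)=u^2$ one has $C'(1,1)=2$ and $\theta=1$, consistent with $\theta=C'(1,1)-1$ and with $\lim\PP(X_2\le u_n\mid X_1>u_n)=\lim F(u_n)=1$, whereas $\lim\PP(X_2>u_n\mid X_1>u_n)=0$. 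The fix is simply to delete the extraneous ``$1-{}$''; then $\theta=\lim_{u\uparrow 1}\frac{u-C(u,u)}{1-u}=C'(1,1)-1$ follows directly and matches the paper's proof line for line. Your derivation of the bounds $0\le\theta\le 1$ from the conditional-probability representation is fine and coincides with the paper's closing remark.
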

\begin{proof}
From \cite{Leadbetter_1989}, for the stationary sequence $\{X_n\}$ with Conditions $D(u_n$ and $D''(u_n)$, $\theta = \lim_{n \to \infty} \PP(X_2\leq u_n |X_1>u_n)$. Then
\begin{eqnarray}
\theta &=&\lim_{n \to \infty} \frac{\PP(X_2 \leq u_n,X_1>u_n)}{\PP(X_1>u_n)}\nonumber \\
&=&\lim_{n \to \infty} \frac{\PP(X_2 \leq u_n)-\PP(X_1 \leq u_n, X_2 \leq u_n)}{\PP(X_1>u_n)} \nonumber \\
&=&\lim_{n \to \infty} \frac{\PP(X_2 \leq u_n)-C\big (\PP(X_1 \leq u_n), \PP(X_2 \leq u_n)\big)}{1-\PP(X_1 \leq u_n)} \nonumber\\
&=&\lim_{x \to 1} \frac{x-C(x,x)}{1-x} \nonumber\\
&=& C'(1,1)-1. \nonumber
\end{eqnarray}
The existence of EI in $[0,1]$ is evident from the definition used in this proof.
\end{proof}

\begin{Remarks}
The condition $D''(u_n)$ can be made weaker to $D^{(k)}(u_n)$ presented in \cite{Chernick_1991},
\[\lim_{n \to \infty} n\PP\left(X_1>u_n\geq \max_{2\leq i \leq k} X_i, \max_{k+1\leq j \leq r_n} X_j >u_n \right)=0,\]
where $r_n$ is defined as in $D''(u_n)$. For the stationary sequence $D^{(2)}(u_n)=D''(u_n)$. If we assume $D^{(k)}$ is satisfied for some $k\geq 2$ along with $D(u_n)$, then following the proof of Proposition \ref{prop:theta}, EI can be derived as \[\theta=C'_k(1)-C'_{k-1}(1),\] where $C_k(x)$ represents the copula of
%-------------NM------------------------------------
% $k$ random samples $C_k(x_1,\ldots,x_k)$ with $x=x_1\ldots=x_k$
$k$-dimensional vector $(x_1,\ldots,x_k)$, $C_k(x_1,\ldots,x_k)$ with $x=x_1\ldots=x_k$
%-------------NM------------------------------------
and $C_{k-1}$ is its $(k-1)$th marginal, $C_{k-1}(x)=C_{k-1}(x_1,\ldots,x_{k-1},1)$ with $x=x_1\ldots=x_{k-1}$.
\end{Remarks}

In some cases it is easy to handle with the joint tail distribution. Survival Copula $\widehat{C}(\cdot,\cdot)$ which corresponds to
 \[\PP(X_1> x, X_2> x)=\widehat{C}(\overline{F}(x), \overline{F}(x)),\]
with $\overline{F}(x)=1-F(x)$, can also be used to calculate $\theta$. It is related to Copula as $\widehat{C}(u,u)=C(1-u,1-u)+2u-1$ \cite[p.\ 32]{Nelsen2007}. Hence $\theta=C'(1,1)-1=1-\widehat{C}'(0,0)$.

Lower tail dependence function of survival copula is defined as \cite{Weng2012}
\[\lambda(u_1,u_2)=\lim_{t\to 0^{+}}\frac{\widehat{C}(tu_1,tu_2)}{t}.\] Hence $\widehat{C}'(0,0)=\lambda(1,1)$. $\lambda$ can be calculated for different copula families. In particular, if $\widehat{C}$ is a bivariate Archimedean copula, then it can be represented as, $\widehat{C}(u_1,u_2)=\psi(\psi^{-1}(u_1)+\psi^{-1}(u_2))$, where $\psi$ is the generator function and $\psi^{-1}$ is its inverse with $\psi:[0,\infty]\to[0,1]$ meeting several other conditions. If $\psi$ is a regularly varying distribution with index $-\beta$,
%-------------NM------------------------------------
$\beta>0$,
%-------------NM------------------------------------
 then $\lambda(x_1,x_2)=(x_1^{-\beta^{-1}}+x_2^{-\beta^{-1}})^{-\beta}$ and $(X_1,X_2)$ has a multivariate regularly varying distribution \cite{Weng2012}. Therefore, for Archimedean copula family, EI is given by
\begin{equation}
\label{eq:theta_archim}
\theta=1-1/2^{\beta}.
\end{equation}
As an example, bivariate Pareto distribution of the form $\PP(X_1> x_1, X_2> x_2)=(1+x_1+x_2)^{-\gamma}$, $\gamma>0$ has Arhimedean copula with generator function $\psi(x)=(1+x)^{-\gamma}$. This gives $\theta=1-1/2^{\gamma}$. Bivariate exponential distribution of the form
\[\PP(X_1> x_1, X_2> x_2)=1-e^{-x_1}-e^{-x_2}+e^{-(x_1+x_2+\eta x_1x_2)},\] $0\leq \eta \leq 1,$ also admits Archimedian copula.

\subsection{Check of conditions $D(u_n)$ and $D''(u_n)$}
\label{subsec:check_condns}
If the sampling technique is assumed to be based on a Markov chain and consider the sampled sequence as measurable functions of stationary Markov samples, then such a sequence is stationary and \cite{Brien_1987} proved that
%-------------NM------------------------------------
%$D(u_n)$
another mixing condition $AIM(u_n)$ which implies $D(u_n)$ is satisfied. 
%If a Markov chain is geometricaly ergodic then $D(u_n)$ is valid, \cite{Roberts_2006}. Particularly, in \cite{Mengersen} the conditions are given when the Markov chain arising from the Metropolis algorithm is geometrically ergodic.
%-------------NM------------------------------------

Condition $D''(u_n)$
%-------------NM------------------------------------
allows clusters with consecutive exceedances and eliminates the possibility of
clusters with upcrossing of the threshold $u_n$ (${X_i\leq u_n <X_{i+1}}$).
%clustering of upcrossings (${X_i\leq u_n <X_{i+1}}$) in small intervals, since it avoids the cases where $\{X_i\}$ oscillates briskly around %the threshold $\{u_n\}$.
%-------------NM------------------------------------
Hence in those cases, where it is tedious to check the condition $D''(u_n)$ theoretically, we can use numerical procedures to measure ratio of number of consecutive exceedances to number of exceedances and the ratio of number of upcrossings to number of consecutive exceedances in small intervals. Such an example is provided in Section \ref{sec:deg_calcn_ei}.

\begin{Remarks}
\label{ferreira_comment}
The EI is derived in \cite{FerFer} to the same expression in \eqref{eq:theta_my_expn}. But \cite{FerFer} assumes $\{X_n\}$ is sampled from a first order Markov chain. This condition is much stricter than $D(u_n)$ and $D''(u_n)$ which we used to derive \eqref{eq:theta_my_expn}. For instance, degrees of the node samples obtained from a Markov chain based sampling, mostly not form a Markov chain as node-degree relation is not one-one while $D(u_n)$ is agreed for such a case and $D''(u_n)$ can get satisfied, see Section \ref{sec:deg_calcn_ei} for an example.
%
%Consider a discrete time process $\{R_n\}_{n\geq 1}$ with the following evolution,
%\begin{equation}
%\label{eq:weight-branch-proc-fmln}
%R_{n+1}=\sum_{j=1}^{N_n}A_{n,j}R_{n,j}+B_n,
%\end{equation}
%where $\{N_n\}_{n\geq 1}$, $\{B_n\}_{n \geq 1}$ and $\{A_{n,j}\}_{j\geq 1, n \geq 1}$ are mutually independent i.i.d.\ random variables. $\{R_{n,j}\}_{j\geq 1}$ are i.i.d.\ copies of $R_n$. This represents a weighted branching process. It can be seen that $\{R_n\}$ is a first order Markov chain. Hence from the above remark, provided the stationary marginal distribution is continuous, EI exists and can be calculated as in \eqref{eq:theta_my_expn} without conditions $D(u_n)$ and $D''(u_n)$. This formulation captures many relevant processes. For instance, $\{R_i\}$ models PageRank process on nodes (scaled stationary distribution of PageRank Markov chain), with $N_n$ representing the indegree, $R_{n+1}$ as the PageRank of a randomly selected node at step $n+1$ and $R_{n,j}$ as PageRanks of nodes pointing to this node \cite{Jelenkovic_2010}. Another example is distributed fluid diffusion model like the one in \cite{Hong_2012}: $R_{n+1}$ is the fluid at a random node in a network and $R_{n,j}$'s are $j$ i.i.d.\ the copies of the fluid at time $n$. The formulation in \eqref{eq:weight-branch-proc-fmln} represents the progress of the fluid as weighted average of fluids from a random number of nodes. 

\end{Remarks}
\section{Degree correlations}
\label{sec:deg_corlns}
The techniques established in Section \ref{sec:calcn_EI} are very general, applicable to any sampling techniques and any sequence of samples which satisfy certain conditions. In this section we illustrate the calculation of extremal index for correlations among degrees. We introduce different sampling techniques through this section though they can be used in case of any general correlations. We denote the sampled sequence $\{X_i\}$ as $\{D_i \}$ in this section.
\subsection{Description of the model}
\label{sec:graph}
We take into account correlation in degrees between neighbor nodes. The dependence structure in the graph is described by the joint degree-degree probability density function $f(d_1,d_2)$ with $d_1$ and $d_2$ indicating the degrees of adjacent nodes or equivalently by the corresponding tail distribution function $\overline{F}(d_1,d_2)=\PP(D_1 \ge d_1, D_2 \ge d_2)$ with $D_1$ and $D_2$ representing the corresponding degree random variables (see e.g., \cite{BBV08, BPV03,GDM08}).

The probability that a randomly chosen edge has the end vertices with degrees $d_1 \leq d \leq d_1+\Delta(d_1)$ and $d_2 \leq d \leq d_2+\Delta(d_2)$ is $(2-\delta_{d_1d_2})f(d_1,d_2)\Delta(d_1)\Delta(d_2)$. Here $\delta_{d_1d_2}=1$ if $d_1=d_2$, zero otherwise. The multiplying factor $2$ appear on the above expression when $d_1 \neq d_2$ because of the symmetry in $f(d_1,d_2)$, $f(d_1,d_2)=f(d_2,d_1)$ due to the undirected nature of the underlying graph, and the fact that both $f(d_1,d_2$ and $f(d_2,d_1)$ contribute to the edge probability under consideration.

The degree density $f_d(d_1)$ can be calculated from the marginal of $f(d_1,d_2)$ as
\begin{equation}
\label{eq:marginal_jt-degree}
f(d_1)=\int_{d_2} f(d_1,d_2)d(d_2) \approx \frac{d_1 f_d(d_1)}{\EE[D]},
\end{equation}
where $\EE[D]$ denotes the mean node degree,
\[\EE[D]=\left[\int \int \left( \frac{f(d_1,d_2)}{d_1} \right) d(d_1)d(d_2)\right]^{-1}.\]
$f(.)$ can be interpreted as the degree density of a vertex reached by following a randomly chosen edge. The approximation for $f(d_1)$ is obtained as follows: in the R.H.S.\ of (\ref{eq:marginal_jt-degree}), roughly, $d_1 f_d(d_1)N$ is the number of half edges from nodes with degree around $d_1$ and $\EE[D] N$ is the total number of half edges.

From the above description, it can be noted that the knowledge of $f(d_1,d_2)$ is sufficient to describe this random graph model and for its generation.

Most of the results in this paper are derived assuming continuous probability distributions for $f(d_1,d_2)$ and $f_d(d_1)$ because an easy and unique way to calculate extremal index exists for continuous distributions in our setup (more details in Section \ref{sec:calcn_EI}). Also the extremal index might not exist for many discrete valued distributions \cite{Leadbetter}.
\subsubsection{Random graph generation}
\label{subsubsec:graph-genern}
A random graph bivariate joint degree-degree correlation distribution can be generated as follows (\cite{Newman2002}).
\begin{enumerate}
\item Degree sequence is generated according to the degree distribution, $f_d(d)=\frac{f(d)E[D]}{d}$
\item An uncorrelated random graph is generated with the generated degree sequence using configuration model (\cite{BBV08})
\item Metropolis dynamics is applied now on the generated graph: choose two edges randomly (denoted by the vertex pairs $(v_1,w_1)$ and $(v_2,w_2)$) and measure the degrees, $(j_1,k_1)$ and $(j_2,k_2)$ correspond to these vertex pairs. Generated a random number, $y$, according to uniform distribution in $[0,1]$. If $y\leq \min(1,(f(j_1,j_2)f(k_1,k_2))/(f(j_1,k_1)f(j_2,k_2)))$, then remove the selected edges and construct news ones as $(v_1,v_2)$ and $(w_1,w_2)$. Otherwise keep the selected edges intact. This dynamics will generate the required joint degree-degree distribution. Run Metropolis dynamics well enough to mix the network.
\end{enumerate}

\subsection{Description of random walks}
\label{sec:desc_rand_walks}
In this section, we explain three different random walk based algorithms for exploring the network. They have been extensively studied in previous works \cite{ART10,Brin1998, L93} where they are formulated with vertex set as the state space of the underlying Markov chain on graph. The walker in these algorithms, after reaching each node, moves to another node randomly by following the transition kernel of the Markov chain. But since the interest in the present work is in the degree sequence, rather than node sequence, and its extremal properties, we take degree set as the state space and find appropriate transition kernels. We use ${f}_{\mathscr{X}}$ and $\PP_{\mathscr{X}}$to represent the probability density function and probability measure under the algorithm $\mathscr{X}$ with the exception that $f_d$ represents the probability density function of degrees.

\subsubsection{Random Walk (RW)}
In a random walk, the next node to visit is chosen uniformly among the neighbors of the current node. From (\ref{eq:marginal_jt-degree}) we approximate the standard random walk on degree state space by the following transition kernel, conditional density function that the present node has degree $d_t$ and the next node is with degree $d_{t+1}$,
\begin{equation}
\label{eq:RWkernel}
f_{RW}(d_{t+1}|d_{t}) \approx \frac{\EE[D] f(d_t,d_{t+1})}{d_t f_d(d_t)}.
\end{equation}
%\begin{equation}
%\label{eq:RWkernel}
%f_{RW}(d_{t+1}|d_{t}) \approx
%\begin{cases}
%\displaystyle\frac{\EE[D] f(d_t,d_{t+1})}{d_t f_d(d_t)} & \textnormal{if } d_t \neq d_{t+1},
%\\
%\displaystyle \frac{\EE[D] f(d_t,d_{t+1})}{2d_t f_d(d_t)} & \textnormal{if } d_t = d_{t+1}.
%\end{cases}
%\end{equation}
This approximation is obtained as follows: given the present node has degree $d_t$, $1/d_t$ is the probability of selecting a neighbor uniformly and rest of the terms in R.H.S.\ represent the mean number of neighbors with degree around $d_{t+1}$. When $d_t \neq d_{t+1}$, $\frac{\EE[D]N}{2}(2f(d_t,d_{t+1}))$ is the mean number of edges between degrees about $d_t$ and $d_{t+1}$ and $f_d(d_t)N$ is the mean number of nodes with degrees about $d_t$, and thus their ratio represents such a mean number of edges per node with degree about $d_t$, i.e., mean number of neighbors with degree about $d_{t+1}$.  The probability of occurring the other case, $d_t=d_{t+1}$, is zero as the degrees are assumed to follow a continuous distribution.

If the standard random walk on the vertex set is in the stationary regime, its stationary distribution (probability of staying at a particular vertex $i$) is proportional to the degree (see e.g., \cite{L93}) and is given by $d_i/2M$. Then in the standard random walk on degree set, the stationary distribution of staying at any node with degree around $d_1$ can be approximated as $Nf_d(d_1)\left(d_1/2M|\right)$. Thus
\begin{equation*}
f_{RW}(d_1) \approx \frac{d_1}{\EE[D]}  f_d(d_1),
\end{equation*}
Then, the joint density of the standard random walk is $f_{RW}(d_{t+1}, d_{t}) \approx f(d_t,d_{t+1}).$\\
\subsubsection*{Check of the approximation}
\label{subsubsec:RW_trans_comp}
We provide comparison of simulated values and theoretical values of transition kernel of RW in Figure \ref{fig:trans_kernel_RW}. The bivariate Pareto model is assumed for the joint degree-degree tail function of the graph,
\begin{equation}
\label{eq:biPareto}
\bar{F}(d_1,d_2) = \left(1+\frac{d_1-\mu}{\sigma}+\frac{d_2-\mu}{\sigma}\right)^{-\gamma},
\end{equation}
where $\sigma$, $\mu$ and $\gamma$ are positive values. In the figure, $N$ number of nodes is 5,000. $\mu=10$, $\gamma=1.2$ and $\sigma=15$. These choices of parameters provides $E[D]=21.0052$.  At each instant Metropolis dynamics will choose two edges and it has run 200,000 times (provides sufficient mixing). The figure shows satisfactory fitting of the approximation.

\begin{figure}[!htb]
%\hspace*{-0.3 cm}
\centering
\includegraphics[trim = 16mm 0 19mm 5mm, clip=true, scale=0.23]{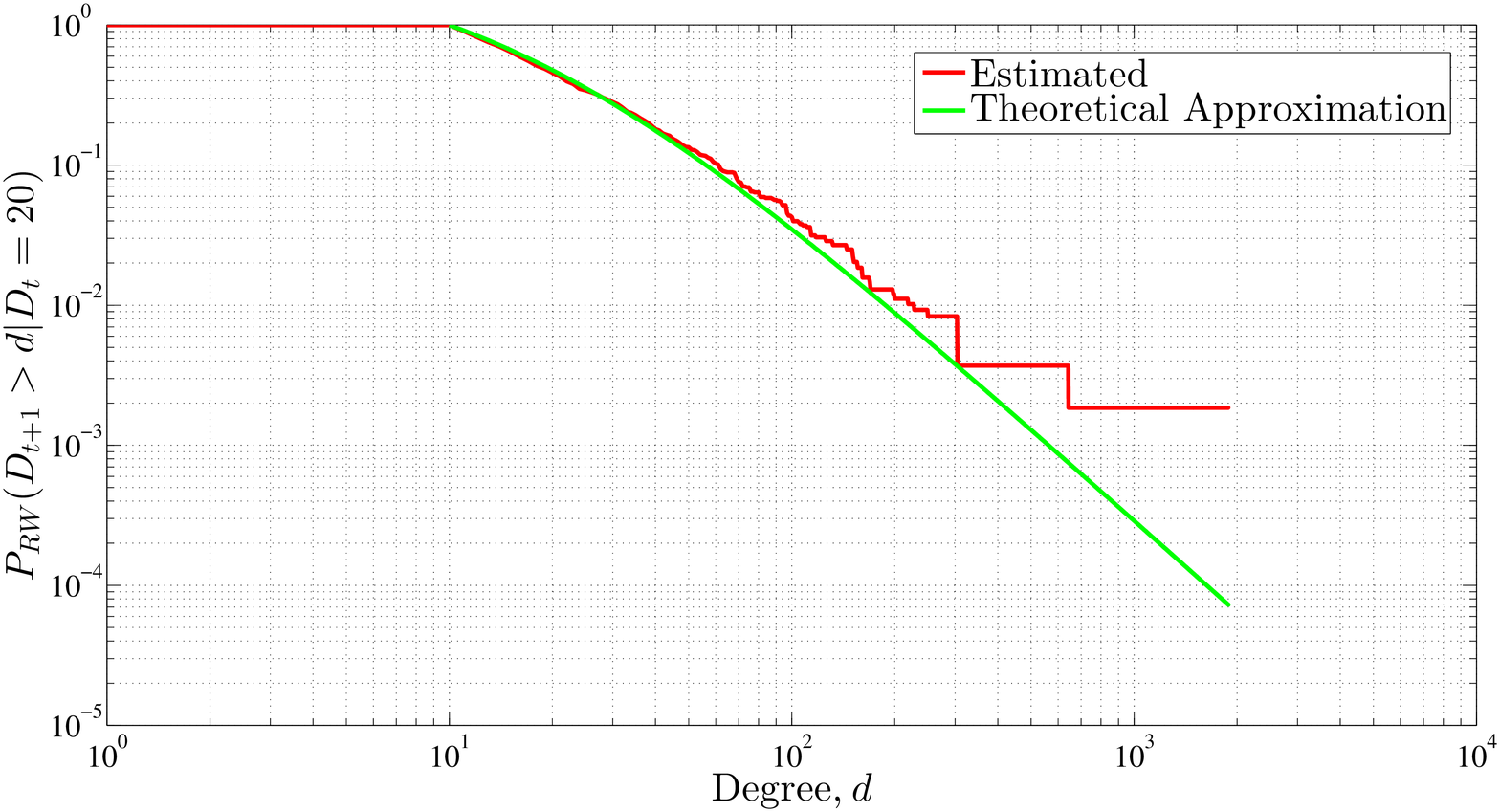}
%\vspace{-0.7 cm}
\caption{Transition kernel comparison}
%\vspace{-0.5 cm}
\label{fig:trans_kernel_RW}
\end{figure}

\subsubsection{PageRank (PR)}
\label{sec:PR}
PageRank is a modification of the random walk which with a fixed probability $1-c$ samples a random node with uniform distribution and with a probability $c$, it follows the random walk transition  \cite{Brin1998}. Its evolution on degree state space can be described as follows:
%\begin{IEEEeqnarray}{rCl}
%\label{eq:PRkernel}
%\lefteqn{f_{PR}(d_{t+1}|d_{t})}\nonumber\\
%&=& c\; f_{RW}(d_{t+1}|d_{t}) + (1-c) \frac{1}{N} N f_d(d_{t+1})\nonumber\\
%&=& c\; f_{RW}(d_{t+1}|d_{t}) + (1-c) f_d(d_{t+1})
%\end{IEEEeqnarray}
\begin{IEEEeqnarray}{rCl}
\label{eq:PRkernel}
f_{PR}(d_{t+1}|d_{t})&\approx & c\; f_{RW}(d_{t+1}|d_{t}) + (1-c) \frac{1}{N} N f_d(d_{t+1})\nonumber\\
&\approx & c\; f_{RW}(d_{t+1}|d_{t}) + (1-c) f_d(d_{t+1})
\end{IEEEeqnarray}
Here the $1/N$ corresponds to the uniform sampling on vertex set and $\frac{1}{N} N f_d(d_{t+1})$ indicates the net probability of jumping to all the nodes with degree around $d_{t+1}$.
\subsubsection*{Check of the approximation} We provide a consistency check of the approximation derived for transition kernel by studying tail behavior of degree distribution and PageRank distribution. It is known that under some strict conditions, for a directed graph, PageRank and Indegree have same tail exponents \cite{Litvak}. In our formulation in terms of degrees, for \textit{uncorrelated} and undirected graph, PageRank for a given degree $d$, $PR(d)$, can be approximated from the basic definition as,
\[PR(d)=f_{RW}(d)\approx c\;f_{RW}(d)+(1-c)\;f_d(d).\]
This is a deterministic quantity.
We are interested in the distribution of the random variable $PR(D)$, PageRank of a randomly choosen degree class $D$. PageRank $PR(d)$ is also the long term proportion or probability that PageRank process ends in a degree class with degree $d$. This can be scaled suitably to provide a rank-type information. Its tail distribution is
\begin{eqnarray}
P(PR(D)> x)=P\left(c.f_{RW}(D)+(1-c).f_d(D) > x \right), \nonumber
\end{eqnarray}
where $D \sim f_d(.)$. The PageRank of any vertex inside the degree class $d$ is $PR(d)/(Nf_d(d))$. The distribution of Page Rank of a randomly chosen vertex $i$, $P(PR(i)>x)$ after appropriate scaling for comparison with degree distribution is $P(N.PR(i)>\hat{d})$, where $\hat{d}=Nx$. Now
\begin{eqnarray}
P(N.PR(i)>\hat{d})&=&P\left(N \frac{PR(D)}{Nf_d(D)}>\hat{d}\right) \nonumber \\
&=&P\left(D>\frac{E[D]}{c}\left[\hat{d}-(1-c)\right] \right). \nonumber
\end{eqnarray}
This of the form $P(D>A\hat{d}+B)$ with $A$ and $B$ as appropriate constants and hence will have the same exponent of degree distribution tail when the graph is \textit{uncorrelated}.

There is no convenient expression for the stationary distribution of PageRank, to the best of our knowledge, and it is difficult to come up with an easy to handle expression for the joint distribution. Therefore, along with other advantages, we consider another modification of the standard random walk.
%-------------NM------------------------------------
% This is not true. In papers Litvak & Volkovich and Jelenkovic & OLVERA-CRAVIOTO it is derived that the PageRank R has a power law distribution. You have to use these results.
%-------------NM------------------------------------
\subsubsection{Random Walk with Jumps (RWJ)}
\label{RWJ}
RW sampling leads to many practical issues like the possibility to get stuck in a disconnected component, biased estimators etc. RWJ overcomes such problems (\cite{ART10}).

In this algorithm we follow random walk on  a modified graph which is a superposition of the given graph and complete graph on same vertex set of the given graph with weight $\alpha/N$ on each edge,  $\alpha\in[0,\infty]$ being a design parameter (\cite{ART10}). The algorithm can be shown to be equivalent to select $c=\alpha/(d_t+\alpha)$ in the PageRank algorithm, where $d_t$ is the degree of the present node. The larger the node's degree, less likely is the artificial jump of the process. This modification makes the underlying Markov chain time reversible, significantly reduces mixing time, improves estimation error and leads to a closed form expression for stationary distribution.

The transition kernel on degree set, following PageRank kernel, is
\begin{IEEEeqnarray}{rCl}
f_{RWJ}(d_{t+1}|d_{t}) &\approx &\frac{d_t}{d_t+\alpha}f_{RW}(d_{t+1}|d_{t})+\frac{\alpha}{d_t+\alpha}f_d(d_{t+1})\nonumber \\
&=&\frac{\EE[D]f(d_t,d_{t+1})+\alpha f_d(d_t)f_d(d_{t+1})}{(d_t+\alpha)f_d(d_t)}.\nonumber \IEEEeqnarraynumspace
\end{IEEEeqnarray}

The stationary distribution for node $i$ (on the vertex set) is $(d_i+\alpha)/(2M+N\alpha)$ and the equivalent stationary probability density function on degree set by collecting all the nodes with same degree is
\begin{eqnarray}
\label{eq:JPsta}
f_{RWJ}(d_1) &\approx &\left( \frac{d_1+\alpha}{2M+N\alpha}\right)Nf_d(d_1) \nonumber\\
&=&\frac{(d_1+\alpha)f_d(d_1)}{\EE[D]+\alpha},
\end{eqnarray}
since $2M/N=\EE[D]$. The stationarity of the $f_{RWJ}(d_1$ can be verified by plugging the obtained expression in the stationarity condition of the Markov Chains. We have
\begin{IEEEeqnarray*}{rCl}
{f_{RWJ}(d_1)}
&=&\int f_{RWJ}(d_1|d_2)f_{RWJ}(d_2) d({d_2})\nonumber \\
&\approx &\int \frac{\EE[D]f(d_1,d_2)+\alpha f_d(d_1)f_d(d_2)}{(d_2+\alpha)f_d(d_2)}\;
\frac{(d_2+\alpha)f_d(d_2)}{\EE[D]+\alpha} d({d_2}) \nonumber \IEEEeqnarraynumspace \\
\label{eq:RWJPmarginal}
&\approx &\frac{(d_1+ \alpha)  f_d(d_1)}{\EE[D]+\alpha},
\end{IEEEeqnarray*}
where (\ref{eq:marginal_jt-degree}) has been applied.
Then, the joint density function for the random walk with jumps has the following form
\begin{IEEEeqnarray*}{rCl}
f_{RWJ}(d_{t+1}, d_{t})\approx \frac{\EE[D]f(d_{t+1},d_t)+\alpha f_d(d_{t+1})f_d(d_t)}{\EE[D]+\alpha}. \IEEEeqnarraynumspace
\end{IEEEeqnarray*}
%\begin{multline}
%f_{RWJ}(d_{t+1}, d_{t})= \frac{\EE[D]f(d_{t+1},d_t)+\alpha f_d(d_{t+1})f_d(d_t)}{\EE[D]+\alpha}.
%\end{multline}
Moreover the associated tail distribution has a simple form,
%\begin{IEEEeqnarray*}{rCl}
%f_{RWJ}(D_{t+1} > d_{t+1}, D_{t} > d_{t}) = \frac{\EE[D]\bar{F}(d_{t+1},d_t)+\alpha \bar{F}_d(d_{t+1})\bar{F}_d(d_t)}{\EE[D]+\alpha}.\IEEEeqnarraynumspace
%\end{IEEEeqnarray*}
\begin{equation}
\label{eq:tail_RWJ_joint-distbn}
f_{RWJ}(D_{t+1} > d_{t+1}, D_{t} > d_{t}) \approx \frac{\EE[D]\overline{F}(d_{t+1},d_t)+\alpha \overline{F}_d(d_{t+1})\overline{F}_d(d_t)}{\EE[D]+\alpha}.
\end{equation}
\begin{Remarks}
Characterizing Markov chain based sampling in terms of degree transition has some advantages,
\begin{itemize}
\item In the different random walk algorithms considered on vertex set, all the nodes with same degree have same stationary distribution. This also implies that it is more natural to formulate the random walk transition in terms of degree.
\item Degree uncorrelations in the underlying graph is directly reflected in the joint distribution of the studied sampling techniques. For uncorrelated networks, $f_{RW}(d_1,d_2)=f_{RW}(d_1)f_{RW}(d_2)$, $f_{PR}(d_1,d_2)=f_{PR}(d_1)f_{PR}(d_2)$ and $f_{RWJ}(d_1,d_2)=f_{RWJ}(d_1)f_{RWJ}(d_2)$.
\end{itemize}
\end{Remarks}

\subsection{Extremal Index for bivariate Pareto Degree Correlation}
\label{sec:deg_calcn_ei}
As explained in the Introduction section, extremal index is an important parameter in characterizing dependence and extremal properties in a stationary sequence. We assume that we have waited sufficiently long that the underlying Markov chain of  the three different graph sampling algorithms are in stationary regime now. Here we derive EI of RW and RWJ for the model with degree correlation among neighbours as bivariate Pareto \eqref{eq:biPareto}.

The two mixing conditions $D(u_n)$ and $D''(u_n)$ introduced in Section \ref{sec:calcn_EI} are needed for our EI analysis. Condition $D(u_n)$ is satisfied as explained in Section \ref{subsec:check_condns}. An empirical evaluation of $D''(u_n)$ is provided in Section \ref{subsub:check_d2_emp}. 

\subsubsection{EI for Random Walk sampling}
\label{subsec:theta_ex_RW}
We use the expression for EI given in Proposition \ref{prop:theta}. As $f_{RW}(x,y)$ is same as $f(x,y)$, we have,
\begin{eqnarray}
\widehat{C}(u,u)&=&\PP(D_1>\bar{F}^{-1}(u),D_2>\bar{F}^{-1}(u))\nonumber \\
&=&\left(1+2(u^{-1/\gamma}-1) \right)^{-\gamma} \nonumber \\
\widehat{C}'(u,u)&=& 2(2-u^{1/\gamma})^{-(\gamma+1)}. \nonumber
\end{eqnarray}
Thus $\theta=1-\widehat{C}'(0,0)=1-1/2^{\gamma}$. For $\gamma=1$ we get $\theta=1/2$. In this case, we can also use expression given in \eqref{eq:theta_archim}.

\subsubsection{EI for Random Walk with Jumps sampling}
Although it is possible to derive EI as in RW case above, we provide an alternative way to avoid the calculation of tail distribution of degrees and inverse of RWJ marginal (with respect to the bivariate Pareto degree correlation).

Under the assumption of $D''$, 
\begin{equation}
\label{eq:proof_RWJ_2pareto}
\theta = \lim_{n \to \infty}
\frac{P(D_2 \le u_n, D_1>u_n)}{P(D_1 > u_n)}
= \lim_{n \to \infty}
\frac{P(D_1 \ge u_n)-P(D_2 \ge u_n, D_1 \ge u_n)}{P(D_1 > u_n)}
\end{equation}

Now using the condition \eqref{eq:CCDF_condn_theta} on the marginal and joint tail distribution of RWJ \eqref{eq:tail_RWJ_joint-distbn}, we can write$^{(\textrm{b})}$

\begin{eqnarray}
\lefteqn{\frac{P(D_1 \ge u_n)-P(D_2 \ge u_n, D_1 \ge u_n)}{P(D_1 > u_n)}} \nonumber\\
&=&\frac{\tau/n+o(1/n) - \frac{E[D]}{E[D]+\alpha}P_{RW}(D_2 \ge u_n, D_1 \ge u_n) - \frac{\alpha}{E[D]+\alpha}O(\tau/n)O(\tau/n) }{\tau/n+o(1/n)} \nonumber
\end{eqnarray}

The asymptotics in the last term of the numerator is due to the following:
\[\overline{F}_{RWJ}(u_n)=\frac{E[D]}{E[D]+\alpha}\overline{F}(u_n)+\frac{\alpha}{E[D]+\alpha} \overline{F}_d(u_n)=\tau/n+o(1/n),\]
and hence $\overline{F}_d(u_n)=O(\tau/n)$. Therefore \eqref{eq:proof_RWJ_2pareto} becomes
\begin{equation*}
\theta=1-\frac{E[D]}{E[D]+\alpha}\lim_{n \to \infty}P_{RW}(D_2 \ge u_n, D_1 \ge u_n)n/\tau
\end{equation*}

In the case of the bivariate Pareto distribution \eqref{eq:biPareto}, we obtain
\begin{equation}
\label{eq:EI-RWJ_2pareto}
\theta = 1 - \frac{E[D]}{E[D]+\alpha} 2^{-\gamma}
\end{equation}

\subsection{Lower bound of EI of the PageRank}
We obtain the following lower bound for EI in the PageRank processes.
\begin{proposition}
\label{prop:PR}
For the PageRank process on degree state space irrespective of the degree correlation structure in the underlying graph, the extremal index  
\[\theta \geq (1-c).\]
\end{proposition}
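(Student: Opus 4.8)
The plan is to use the characterization from Proposition~\ref{prop:theta} (and its proof), namely $\theta = \lim_{n\to\infty}\PP(X_2 \le u_n \mid X_1 > u_n)$, applied to the degree sequence $\{D_n\}$ generated by the PageRank process. First I would write
\[
\theta = \lim_{n\to\infty}\frac{\PP_{PR}(D_2 \le u_n,\, D_1 > u_n)}{\PP_{PR}(D_1 > u_n)} = 1 - \lim_{n\to\infty}\frac{\PP_{PR}(D_2 > u_n,\, D_1 > u_n)}{\PP_{PR}(D_1 > u_n)},
\]
so that the claim $\theta \ge 1-c$ is equivalent to the upper bound
\[
\limsup_{n\to\infty}\frac{\PP_{PR}(D_1 > u_n,\, D_2 > u_n)}{\PP_{PR}(D_1 > u_n)} \le c.
\]

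The key step is to condition on the nature of the transition from the first sample to the second, using the PageRank kernel \eqref{eq:PRkernel}: with probability $c$ the walker follows the RW transition, and with probability $1-c$ it jumps to a node drawn from the degree density $f_d$ independently of the current state. Concretely, I would write the joint tail as
\[
\PP_{PR}(D_1 > u_n,\, D_2 > u_n) = c\,\PP_{PR}\big(D_1 > u_n,\ D_2 > u_n,\ \text{RW step}\big) + (1-c)\,\PP_{PR}(D_1 > u_n)\,\overline{F}_d(u_n),
\]
because on the jump event $D_2$ is independent of $D_1$ with marginal $f_d$ (here $\overline{F}_d(u_n)=\PP(D\ge u_n)$ for $D\sim f_d$). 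Bounding the first term trivially by $\PP_{PR}(D_1 > u_n)$, dividing through by $\PP_{PR}(D_1 > u_n)$, and letting $n\to\infty$ gives
\[
\limsup_{n\to\infty}\frac{\PP_{PR}(D_1 > u_n,\, D_2 > u_n)}{\PP_{PR}(D_1 > u_n)} \le c + (1-c)\lim_{n\to\infty}\overline{F}_d(u_n) = c,
\]
since $\overline{F}_d(u_n)\to 0$ as $u_n\to\infty$. This yields $\theta \ge 1-c$.

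The main obstacle is handling the stationary measure correctly. The conditioning argument above tacitly assumes that, under the stationary regime of the PageRank chain on the degree space, $(D_1,D_2)$ decomposes cleanly according to whether a jump occurred at the second step; one must verify that the stationary law of $D_1$ is the same whether or not a jump follows, which holds because the jump decision at step $t{+}1$ is independent of the state at step $t$. A secondary technical point is that the PageRank chain on the vertex set is stationary and one is looking at a (non-injective) function of it, so Condition $D(u_n)$ is inherited as in Section~\ref{subsec:check_condns}, and the bound should be stated as holding whenever $D''(u_n)$ (or the weaker $D^{(k)}(u_n)$) is in force so that Proposition~\ref{prop:theta}'s representation of $\theta$ applies. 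I would also remark that equality $\theta = 1-c$ is attained in the uncorrelated case, where $\PP_{PR}(D_1>u_n,D_2>u_n)$ reduces exactly to $c\,\overline{F}_d(u_n)\PP_{PR}(D_1>u_n)/\overline{F}_{PR}(u_n) + (1-c)\PP_{PR}(D_1>u_n)\overline{F}_d(u_n)$, whose ratio to $\PP_{PR}(D_1>u_n)$ tends to $c$ — showing the bound is tight.
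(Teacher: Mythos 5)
Your core argument is sound and lands on the right bound, but it takes a genuinely different route from the paper. You start from the two-term representation $\theta=\lim_{n\to\infty}\PP(D_2\le u_n\mid D_1>u_n)$ of Proposition~\ref{prop:theta}, which requires Condition $D''(u_n)$ in addition to $D(u_n)$. The paper instead starts from O'Brien's block-maximum representation $\theta=\lim_{n\to\infty}\PP(M_{1,p_n}\le u_n\mid D_1>u_n)$ with $M_{1,p_n}=\max\{D_2,\dots,D_{p_n}\}$ and $p_n=o(n)$, which holds under the $AIM(u_n)$ condition alone and therefore genuinely supports the phrase ``irrespective of the degree correlation structure.'' Both proofs share the same key probabilistic idea --- condition on the uniform-jump event $\mathcal{A}$ (probability $1-c$) and exploit the independence it creates --- but the paper must then control an entire block of $p_n-1$ future values after the jump, which it does via \eqref{eq:max_ei_reln} and the choice $p_n-1=\sqrt{n}$ so that $(1-\tau/n)^{p_n-1}\to 1$. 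Your version only has to control one step, which is shorter and more transparent, but the price is the extra anti-clustering hypothesis $D''(u_n)$ that you correctly flag; under the paper's hypotheses your proof does not quite establish the stated generality.

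Two further points. First, your decomposition of the joint tail double-counts the factor $c$: the first term should read $\PP_{PR}(D_1>u_n,\,D_2>u_n,\,\text{RW step})$ or equivalently $c\,\PP_{PR}(D_1>u_n,\,D_2>u_n\mid\text{RW step})$, not $c$ times the unconditioned joint event. Your subsequent bound of that term by $c\,\PP_{PR}(D_1>u_n)$ is the correct one, so the final inequality is unaffected, but the displayed identity is not an identity as written. Second, the closing tightness remark is wrong: in the uncorrelated case $D_2$ is independent of $D_1$ under \emph{both} branches of the kernel (the RW branch then samples $D_2$ from $f_{RW}$ independently of $D_1$), so $\PP_{PR}(D_2>u_n\mid D_1>u_n)=\PP_{PR}(D_2>u_n)\to 0$ and $\theta=1$, not $1-c$. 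The bound is approached not in the uncorrelated regime but when the RW branch exhibits asymptotically full tail dependence; this is consistent with the paper's numerical observation that the bound tightens as $c$ decreases.
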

\begin{proof}
From \cite{Brien_1987}, the following representation of EI holds for degree sequence, 
\begin{equation}
\label{eq:Brien_EI_repsn}
\lim_{n\to\infty}\PP\{M_{1,p_n}\le u_n|D_{1} >u_n\}= \theta,
\end{equation}
where $\{p_n\}$ is an increasing sequence of positive integers, $p_n=o(n)$ as $n\to\infty$ and $M_{1,p_n}=\max\{D_2,..., D_{p_n}\}$. Let $\mathcal{A}$ be the event that the node corresponding to $D_2$ is selected uniformly among all the nodes, not following random walk from the node for $D_1$. Then $\PP_{PR}(\mathcal{A})=1-c$. Now, with (\ref{eq:PRkernel}), 
\begin{IEEEeqnarray}{rCl}
{\PP _{PR}(M_{1,p_n}\le u_n|D_{1}>u_n)}
& \geq & \PP_{PR} (M_{1,p_n}\le u_n, \mathcal{A}|D_{1}>u_n) \nonumber\\
&=& \PP_{PR}(\mathcal{A}|D_1>u_n) \PP_{PR}(M_{1,p_n}\le u_n|\mathcal{A},D_{1}>u_n) \nonumber \\
&\stackrel{(i)}=&(1-c)\PP_{PR}(M_{1,p_n}\le u_n), \nonumber \\
&\stackrel{(ii)}=&(1-c)\PP_{PR}^{(p_n-1)\theta}({D_{1}\le u_n})+o(1)\nonumber\\
&\geq &(1-c)\PP_{PR}^{(p_n-1)}({D_{1}\le u_n})+o(1) \nonumber \\ 
&\stackrel{(iii)}\sim & (1-c)(1-\tau/n)^{p_n-1}, \label{eq:proof_prop2_2}
\end{IEEEeqnarray}
where $\{p_n\}$ is the same sequence as in (\ref{eq:Brien_EI_repsn}) and $(i)$ follows mainly from the observation that conditioned on $\mathcal{A}$, $\{M_{1,p_n}\le u_n \}$ is independent of $\{D_{1}>u_n\}$, $(ii)$ and $(iii)$ result from the approximations in \eqref{eq:max_ei_reln} and \eqref{eq:CCDF_condn_theta} respectively.

Assuming $p_n-1=n^{1/2}$ and since $(1-\tau/n)^{p_n-1}\sim e^{-\tau/\sqrt n}\rightarrow 1$ as $n\to\infty$, from \eqref{eq:Brien_EI_repsn} and \eqref{eq:proof_prop2_2},
\begin{equation*}
\label{14}\theta \ge 1-c.
\end{equation*}

The PageRank transition kernel \eqref{eq:PRkernel} on the degree state space does not depend upon the random graph model in Section \ref{sec:graph}. Hence the derived lower bound of EI is useful for any degree correlation model.
\end{proof}

\section{Applications of Extremal Index in Network Sampling Processes}
\label{sec:applcn_ei}
This section provides several uses of EI to infer the sampled sequence. This emphasis that the analytical calculation and estimation of EI are practically relevant.

The limit of the point process of exceedances, $N_n(.)$, which counts the times, normalized by $n$, at which $\{X_i\}_{i=1}^n$ exceeds a threshold $u_n$ provides many applications of extremal index. A cluster is considered to be formed by the exceedances in a block of size $r_n$ ($r_n=o(n)$) in $n$ with cluster size $\xi_n=\sum_{i=1}^{r_n}1(X_i>u_n)$ when there is at least one exceedance within $r_n$. The point process $N_n$ converges weakly to a compound poisson process ($CP$) with rate $\theta \tau$ and i.i.d.\ distribution as the limiting distribution of cluster size, under condition \eqref{eq:CCDF_condn_theta} and a mixing condition, and the points of exceedances in $CP$ correspond to the clusters \cite[Section 10.3]{Beirlant}. We name this kind of clusters as blocks of exceedances.

The applications below require a choice of the threshold sequence $\{u_n\}$ satisfying \eqref{eq:CCDF_condn_theta}. For practical purposes, if a single threshold $u$ is demanded for the sampling budget $B$, we can fix $u=\max\{u_1,\ldots,u_B\}$.

The applications in this section are explained with the assumption that the sampled sequence is the sequence of node degrees. But the following techniques are very general and can be extended to any sampled sequence satisfying conditions $D(u_n)$ and $D''(u_n)$.

\subsection{Order statistics of the sampled degrees}
The order statistics $X_{n-k,n} $, $(n-k)$th maxima, is related to $N_n(.)$ and
%-------------NM, corrected by JKS------------------------------------
 thus to $\theta$ by
 %-------------NM------------------------------------by,
\[\PP(X_{n-k,n} \leq u_n)=\PP(N_n((0,1])\leq k), \]
where we apply the result of convergence of $N_n$ to $CP$  \cite[Section 10.3.1]{Beirlant}.
\subsubsection{Distribution of Maxima}
The distribution of the maxima of the sampled degree sequences  can be derived as \eqref{eq:max_ei_reln} when $n\to\infty$.
%\begin{eqnarray}
%\label{eq:max_ei}
%\PP_{\mathscr{X}}\{M_n\le x\}=F^{n\theta}_{\mathscr{X}}(x)+o(1).
%\end{eqnarray}
Hence if the extremal index of the underlying process is known then from (\ref{eq:max_ei_reln}) one can approximate the $(1-\eta)$th quantile $x_{\eta}$ of the maximal degree $M_n$ as
\[\PP\{M_n\le x_{\eta}\}=F^{n\theta}(x_{\eta})=\PP^{n\theta}\{X_1\le x_{\eta}\}=1-\eta,\]
i.e.
\begin{equation}
\label{eq:quant_theta_genrl}
x_{\eta}\approx F^{-1}\left(\left(1-\eta\right)^{1/(n\theta)}\right).
\end{equation}
In other words, quantiles can be used to find the maxima of the degree sequence with certain probability. 
%In case the distribution $F(x)$ is unknown, the quantile of maximum node degree can be calculated by block-maxima as in \cite{Natalia_LNCS2014}.

For a fixed certainty $\eta$, $x_\eta$ is proportional to $\theta$. Hence if the sampling procedures have same marginal distribution, with calculation of EI, it is possible to predict how much large values can be achieved. Lower EI indicates lower value for $x_{\eta}$ and higher represents
%-------------NM, corrected by JKS------------------------------------
high
%-------------NM------------------------------------
 $x_{\eta}$.

For the random walk example in Section \ref{subsec:theta_ex_RW} for the degree correlation model, with the use of (\ref{eq:quant_theta_genrl}), we get the  $(1-\eta)$th quantile of the maxima $M_n$
\[x_{\eta}\approx \mu+\sigma\left(\left(1-(1-\eta)^{1/(n \theta)}\right)^{-1/\gamma}-1\right) \]

The following example demonstrates the effect of neglecting correlations on the prediction of the largest degree node. The largest degree, with the assumption of Pareto distribution for the degree distribution, can be approximated as $KN^{1/\delta}$ with $K\approx 1$, $N$ as the number of nodes and $\gamma$ as the tail index of complementary distribution function of degrees \cite{Kostia2012}. For Twitter graph (recorded in 2012), $\delta=1.124$ for outdegree distribution and $N=537,523,432$ \cite{Maksym_2014}. This gives the largest degree prediction as $59,453,030$. But the actual largest out degree is $22,717,037$. This difference is because the analysis in \cite{Kostia2012} assumes i.i.d.\ samples and does not take into account the degree correlation. With the knowledge of EI, correlation can considered as in \eqref{eq:max_ei_reln}. In the following section, we derive an expression for such a case. %The above numbers give an indication that EI may be equal to in Twitter graph. 
\subsubsection{Estimation of largest degree when the marginals are Pareto distributed}
It is known that many social networks have the degree asymptotically distributed as Pareto. We find that in these cases, the marginal distribution of degrees of the random walk based methods also follow Pareto distribution (though we have derived only for the model with degree correlations among neighbors, see Section \ref{sec:deg_corlns})

\begin{claim}
For any stationary sequence with marginal distribution following Pareto distribution $\bar{F}(x)=Cx^{-\delta}$, the largest value is 
\[M_n \approx (n \theta)^{1/\delta} \Big(\frac{C}{\log 2} \Big)^{1/\delta} \]
\end{claim}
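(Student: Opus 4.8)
The plan is to use the relation between the distribution of the maximum and the extremal index, namely \eqref{eq:max_ei_reln}, together with the Pareto tail assumption, and then extract the typical value of $M_n$ by setting the limiting distribution function equal to a fixed constant (the median is the convenient choice, which is where the $\log 2$ will come from). First I would write, using \eqref{eq:max_ei_reln} and $F(x) = 1 - Cx^{-\delta}$,
\[
\PP\{M_n \le x\} \approx F^{n\theta}(x) = \bigl(1 - C x^{-\delta}\bigr)^{n\theta}.
\]
For $x$ growing with $n$ so that $Cx^{-\delta} \to 0$, the standard approximation $(1-a)^m \approx e^{-ma}$ gives $\PP\{M_n \le x\} \approx \exp\bigl(-n\theta C x^{-\delta}\bigr)$.

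Next I would locate the scale at which $M_n$ concentrates by asking for which $x$ this probability equals a fixed level $p \in (0,1)$. Solving $\exp(-n\theta C x^{-\delta}) = p$ gives $x = (n\theta)^{1/\delta}\bigl(C/\log(1/p)\bigr)^{1/\delta}$. Taking $p = 1/2$, i.e.\ reading off the median of the limiting law of $M_n$, yields exactly
\[
M_n \approx (n\theta)^{1/\delta}\Bigl(\frac{C}{\log 2}\Bigr)^{1/\delta},
\]
which is the claimed expression. I would remark that the same computation with a general $p$ recovers the quantile formula \eqref{eq:quant_theta_genrl} specialized to the Pareto marginal, so the claim is simply the $\eta = 1/2$ instance presented as a point estimate of the largest value.

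The only delicate point — and the one I would flag as the main obstacle — is justifying that the conditions required for \eqref{eq:max_ei_reln} are in force, i.e.\ that $\{X_n\}$ is stationary with a well-defined extremal index $\theta$ and satisfies the mixing hypotheses ($D(u_n)$, $D''(u_n)$) under which $\theta = C'(1,1) - 1$ was derived; this is already assumed in the statement ("for any stationary sequence with marginal distribution following Pareto"), so within the paper's framework it is granted. The remaining steps are the elementary asymptotic manipulations above, which are routine and which I would not belabor: one only needs $Cx^{-\delta} = o(1)$ along the relevant scale $x \asymp (n\theta)^{1/\delta}$, and indeed $C x^{-\delta} \asymp (\log 2)/n \to 0$, so the exponential approximation is legitimate and the error terms are $o(1)$, consistent with the $o(1)$ already present in \eqref{eq:max_ei_reln}.
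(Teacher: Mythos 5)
Your proposal is correct and arrives at exactly the claimed formula, but it takes a more elementary route than the paper. The paper's proof goes through the classical extreme value limit theorem: it writes $M_n \approx a_n\chi + b_n$ with the Pareto normalizing sequences $a_n=\gamma C^{\gamma}n^{\gamma}$, $b_n=C^{\gamma}n^{\gamma}$ ($\gamma=1/\delta$), invokes the known fact that under extremal index $\theta$ the limit law $H_{\gamma}$ is replaced by $H'_{\gamma'}$ with $\gamma'=\gamma$, $\sigma'=\theta^{\gamma}$, $\mu'=(\theta^{\gamma}-1)/\gamma$, and then takes the median of $\chi$ (following the i.i.d.\ analysis it cites), which is where its $\log 2$ enters. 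You instead bypass the limit law and its normalizing constants entirely: you invert $\PP\{M_n\le x\}\approx F^{n\theta}(x)\approx \exp(-n\theta Cx^{-\delta})$ at level $1/2$ and solve for $x$. The two computations are equivalent — composing the median of $H'_{\gamma'}$ with the affine normalization is the same as reading off the median of $\exp(-n\theta Cx^{-\delta})$ — and your check that $n\bar F(x)$ stays bounded along the scale $x\asymp(n\theta)^{1/\delta}$ is exactly what legitimizes using \eqref{eq:max_ei_reln} there. What your route buys is transparency and a free byproduct (the general quantile formula \eqref{eq:quant_theta_genrl} for arbitrary $\eta$, of which the claim is the $\eta=1/2$ case); what the paper's route buys is the explicit placement of the result inside the standard EVT framework, making clear that the only change from the i.i.d.\ prediction of \cite{Kostia2012} is the substitution $n\mapsto n\theta$ induced by the parameter shift $(\sigma',\mu')$. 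You also correctly flag that the mixing hypotheses underlying \eqref{eq:max_ei_reln} are assumed rather than verified, which matches the paper's own treatment.
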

\begin{proof}
From extreme value theory \cite{Beirlant}, it is known that when $\{X_i, i\geq 1 \}$ are i.i.d.,
\begin{equation}
\label{eq:lar_deg_1}
\lim_{n \to \infty} \PP \left(\frac{M_n-b_n}{a_n} \leq x \right)=H_\gamma(x),
\end{equation}
where $H_{\gamma}(x)$ is the extreme value distribution with index $\gamma$ and  $\{a_n\}$ and $\{b_n\}$ are appropriately chosen deterministic sequences. When $\{X_i, i\geq 1 \}$ are stationary with EI $\theta$, the limiting distribution becomes $H'_{\gamma'}(x)$ and it differs from $H_{\gamma}(x)$ only through parameters. $H_{\gamma}(x)=\exp(-t(x))$ with $t(x)=\left(1+\left(\frac{x-\mu}{\sigma} \right) \gamma \right)^{-1/ \gamma}$. With the normalizing constants ($\mu=0$  and $\sigma=1$), $H'_{\gamma'}$ has the same shape as $H_\gamma$ with parameters $\gamma'=\gamma$, $\sigma'=\theta^{\gamma}$ and $\mu'=(\theta^{\gamma}-1)/\gamma$. 

%Now the above equation implies,
%\begin{equation}
%\lim_{n \to \infty} n \overline{F}(a_nx+b_n)=-\log H_{\gamma} (x).
%\end{equation}
%In case of stationary sequences with EI $\theta$, from \eqref{eq:max_ei_reln} and \cite{},
%\begin{equation}
%\label{eq:lar_deg_2}
%\lim_{n \to \infty} n\overline{F}(a_nx+b_n)=-\frac{\log H'_{\gamma}(x)}{\theta},
%\end{equation}
%where $H'_{\gamma'}(x)$ is the extreme value distribution in case of stationary sequences and it differs from $H_{\gamma}(x)$ only through parameters. $H_{\gamma}(x)=\exp(-t(x))$ with $t(x)=\left(1+\left(\frac{x-\mu}{\sigma} \right) \gamma \right)^{-1/ \gamma}$. With the normalizing constants ($\mu=0$  and $\sigma=1$), $H'_{\gamma'(x)}$ has the same shape as $H_\gamma$ with parameters $\gamma'=\gamma$, $\sigma'=\theta^{\gamma}$ and $\mu'=(\theta^{\gamma}-1)/\gamma$. Substituting these values in \eqref{eq:lar_deg_2}, $\lim_{n \to \infty} n\overline{F}(a_nx+b_n)=(1+\gamma x)^{-1/\gamma}$.

For Pareto case, $\overline{F}(x)=Cx^{-\delta}$, $\gamma=1/ \delta$, $a_n=\gamma C^{\gamma} n^{\gamma}$ and $b_n=C^{\gamma}n^{\gamma}$. From \eqref{eq:lar_deg_1}, for large $n$,  $M_n$ is stochastically equivalent to $a_n \chi+b_n$, where $\chi$ is a random variable with distribution $H'_{\gamma'}$. It is observed in \cite{Kostia2012} that median of $\chi$ is an appropriate choice for the estimation of $M_n$. Median of $\chi=\mu'+\sigma'\left(\frac{(\log 2)^{-\gamma'}-1}{\gamma'} \right)= (\theta^{\gamma}(\log 2)^{-\gamma}-1)\gamma^{-1}$. Hence,
\begin{eqnarray}
M_n & \approx & a_n\left(\frac{\theta^\gamma (\log 2)^{-\gamma}}{\gamma}-1 \right)+b_n \nonumber\\
&= & (n \theta)^{1/\delta} \left( \frac{C}{\log 2}\right)^{1/ \delta} \nonumber
\end{eqnarray}
\end{proof}

\subsection{Relation to first hitting time and interpretations}
Extremal index also gives information about the first time $\{X_n\}$ hits $(u_n,\infty)$. Let $T_n$ be this time epoch. As $N_n$ converges to compound poisson process, it can be observed that $T_n/n$ is asymptotically an exponential random variable with rate $\theta \tau$, i.e., $\lim_{n\to \infty}\PP(T_n/n>x)=\exp(-\theta \tau x)$. Therefore $\lim_{n \to \infty}\EE(T_n/n)=1/(\theta \tau)$. Thus the more EI smaller, the more time it will take to hit the extreme levels as compared to independent sampling. This property can make use to compare different sampling procedures.
%
%Let $T_n$ be be the first time $\{X_n\}$ hits $(u_n,\infty)$ and $\lim_{n \to \infty}\EE(T_n/n)=1/\theta$ \cite{Roberts_2006}. Thus the knowledge of $\theta$ can be used to compare different sampling procedures.

\subsection{Relation to mean cluster size}
If the conditions $D''(u_n)$ is satisfied along with $D(u_n)$, asymptotically, a run of the consecutive exceedances following an upcrossing is observed, i.e., $\{X_n\}$ crosses the threshold $u_n$ at a time epoch and stays above $u_n$ for some more time before crossing $u_n$ downwards and stays below it for some time until next upcrossing of $u_n$ happens. This is called cluster of exceedances and is more practically relevant than blocks of exceedances at the starting of this section and is shown in \cite{Leadbetter_1989} that these two definitions clusters are asymptotically equivalent resulting in similar cluster size distribution.
%-------------NM------------------------------------
%As $n \to \infty$,
The expected value of cluster of exceedances converges to inverse of extremal index \cite[p.\ 384]{Beirlant}, i.e.,
%\[\theta^{-1}=E\left[ \sum_{i=1}^{r_n}1(X_i>u_n)\big| M_{r_n} >u_n\right].\]
\[\theta^{-1}=\lim_{n\to\infty}\sum_{j \geq 1}j\pi_n(j), \]
%-------------NM------------------------------------
where $\{\pi_n(j),j\geq 1\}$ is the distribution of size of cluster of exceedances with $n$ samples. More details about cluster size distribution and its mean can be found in \cite{Markovich2014}.
%
%\subsection{Relation to Assortativity Coefficient}
%Assortativity coefficient $\rho$ captures degree-degree dependencies in networks \cite{BBV08}. Its value is in $[-1,1]$. For uncorrelated networks, it is zero, while if there is a high correlation between high degree nodes or between low degree nodes, it is 1 and in case of high dependency in the other direction, low degree-high degree, it is -1. From our observations, we found that EI has a relation to $1-|\rho|$, i.e., EI represents the correlation structure while $\rho$ considers not just correlation, but the nature of correlation as well. But the use of $\rho$ is limited as it is mainly introduced only for degree correlations, but EI is more general and more useful than $\rho$ since it has many more interpretations than just measuring the degree correlation.

\section{Estimation of Extremal Index and Numerical results}
\label{sec:simulations}
This section introduces two estimators for EI. Two types of networks are presented: synthetic correlated graph and real networks (Enron email network and DBLP network). For the synthetic graph, we compare the estimated EI to its theoretical value. For the real network, we calculate EI using the two estimators.

We take $\{X_i\}$ as the degree sequence and use RW, PR and RWJ as the sampling techniques. The methods mentioned in the following are general and are not specific to degree sequence or random walk technique.
\subsection{Empirical Copula based estimator}
We have tried different estimators for EI available in literature \cite{Beirlant,FerFer} and found that the idea of estimating copula and then finding value of its derivative at $(1,1)$ works without the need to choose and optimize several parameters found in other estimators. We assume that $\{X_i\}$ satisfies $D(u_n)$ and $D''(u_n)$ and we use \eqref{eq:theta_my_expn} for calculation of EI. Copula $C(u,v)$ is estimated empirically by
\[C_n(u,v)=\frac{1}{n}\sum_{k=1}^n \mathbb{I}\left(\frac{R_{i_k}^X}{n+1}\leq u, \frac{R_{i_k}^Y}{n+1}\leq v\right),\]
with $R_{i_k}^X$ indicates rank of the element $X_{i_k}$ in $\{X_{i_k},1\leq k \leq n\}$, and $Y_{i_k}=X_{i_k+1}$. The sequence $\{X_{i_k}\}$ is chosen from the original sequence $\{X_i \}$ in such a way that $X_{i_k}$ and $X_{i_{k+1}}$ are sufficiently apart to make them independent to certain extent. The large-sample distribution of $C_n (u, v)$ is normal and centered at copula $C(u, v)$. Now, to get $\theta$, we use linear least squares error fitting to find slope at 
%-------------NM------------------------------------
%$1$
$(1,1)$
%-------------NM------------------------------------
 or use cubic spline interpolation for better results.
\subsection{Intervals Estimator}
This estimator does not assume any conditions on $\{X_i\}$, but has the parameter $u$ to choose appropriately. Let $N=\sum_{i=1}^{n} 1(X_i > u)$ be number of exceedances of u at time epochs $1 \leq S_1 < \ldots < S_N \leq n$ and let the interexceedance times are $T_i = S_{i+1}- S_{i}$. Then 
%-------------NM------------------------------------
% interval
intervals 
%-------------NM------------------------------------
estimator is defined as \cite[p.\ 391]{Beirlant},
\begin{equation*}
\hat{\theta}_n(u)=\Big\{
\begin{array}{ll}
\min(1,\hat{\theta}_n^1(u)) , \text{ if } \max{T_i : 1 \leq i \leq N -1} \leq 2,\\
\min(1,\hat{\theta}_n^2(u)) , \text{ if } \max{T_i : 1 \leq i \leq N -1} > 2,
\end{array}
\end{equation*}
where
\[\hat{\theta}_n^1(u)=\frac{2(\sum_{i=1}^{N-1}T_i)^2}{(N-1)\sum_{i=1}^{N-1}T_i^2},\]
and \[\hat{\theta}_n^2(u)=\frac{2(\sum_{i=1}^{N-1}(T_i-1))^2}{(N-1)\sum_{i=1}^{N-1}(T_i-1)(T_i-2)}.\]
We choose $u$ as $\delta$ percentage quantile thresholds, i.e., $\delta$ percentage of $\{X_i,1\leq i \leq n\}$ falls below $u$. The EI  is usually selected corresponding to the stability interval in the plot $(\theta, \delta)$.

\subsection{Synthetic graph} 
The simulations in the section follow the bivariate Pareto model and parameters introduced in \eqref{eq:biPareto}. We use the same set of parameters of Figure \ref{fig:trans_kernel_RW} and the graph is generated according to the technique in Section \ref{subsubsec:graph-genern}. 

For the RW case, Figure \ref{fig:ei_copula_estr} shows copula estimator, and theoretical copula based on the continuous distribution in \eqref{eq:biPareto} and is given by
\[C(u,u)=\left(1+2((1-u)^{-1/\gamma}-1)\right)^{-\gamma}+2u-1.\]
Though we take quantized values for degree sequence, it is found that the copula estimated matches with theoretical copula. The value of EI is then obtained after cubic interpolation and numerical differentiation at point $(1,1)$. For the theoretical copula, EI is $1-1/2^{\gamma}$, where $\gamma=1.2$.

\begin{figure}[!htb]
\centering
\includegraphics[trim = 16mm 0 19mm 5mm, clip=true, scale=0.23]{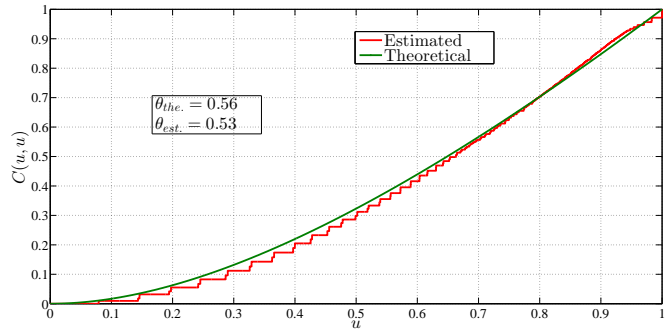}
\caption{Synthetic graph (RW sampling): Empirical and theoretical copulas}
\label{fig:ei_copula_estr}
\end{figure}

Figure \ref{fig:ei_ie_estr} displays the comparison between theoretical value of EI and Intervals estimate.
\begin{figure}[!htb]
\centering
\includegraphics[trim = 16mm 0 19mm 5mm, clip=true, scale=0.23]{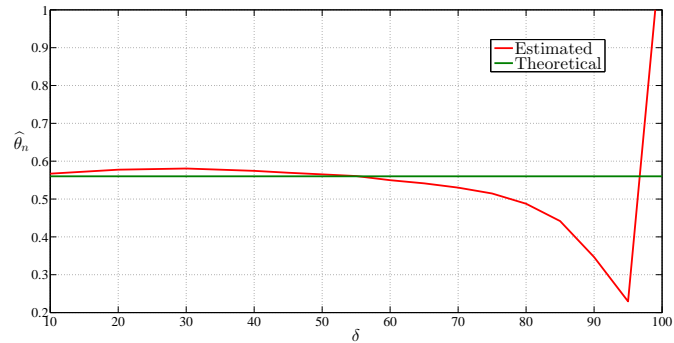}
\caption{Synthetic graph (RW sampling): Intervals estimate and theoretical value $\theta=0.56$ vs the percentage of quantile level, $\delta$}
\label{fig:ei_ie_estr}
\end{figure}

For the RWJ algorithm, Figure \ref{fig:ei_ie_estr_RWJ} shows the Intervals estimate and theoretical value for different $\alpha$. We used the expression \eqref{eq:EI-RWJ_2pareto} for theoretical calculation. The small difference in theory and simulation results is due to the assumption of continuous degrees in the analysis, but the practical usage requires quantized version. 

Figure \ref{fig:ei_ie_estr_PR} displays the Intervals estimate of EI with PR sampling. It can be seen that the lower bound proposed in Proposition \ref{prop:PR} gets tighter as $c$ decreases. 

\begin{figure}[!htb]
\centering
\includegraphics[trim = 16mm 0 19mm 5mm, clip=true, scale=0.23]{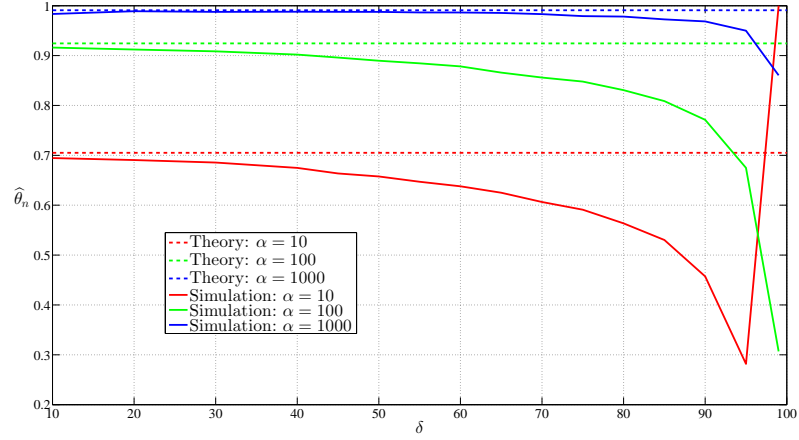}
\caption{Synthetic graph (RWJ sampling): Intervals estimate and theoretical value vs the percentage of quantile level $\delta$}
\label{fig:ei_ie_estr_RWJ}
\end{figure}

\begin{figure}[!htb]
\centering
\includegraphics[trim = 16mm 0 19mm 5mm, clip=true, scale=0.23]{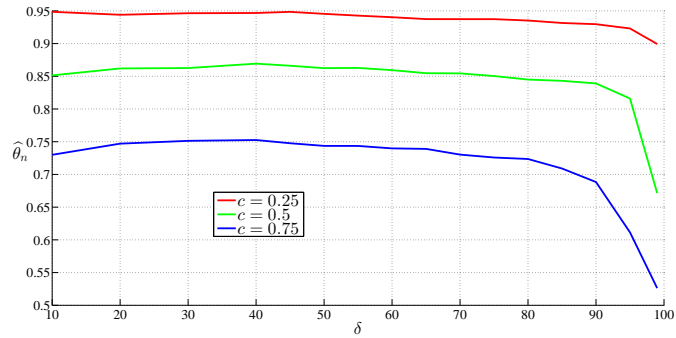}
\caption{Synthetic graph (PR sampling): Intervals estimate vs the percentage of quantile level $\delta$}
\label{fig:ei_ie_estr_PR}
\end{figure}

\subsubsection{Check of condition $D''$}
\label{subsub:check_d2_emp}
The mixing conditions $D(u_n)$ and $D''(u_n)$ need to be satisfied for using the theory in Section \ref{sec:calcn_EI}. Though Intervals estimator does not require them, these conditions will provide the existence of EI. Condition $D(u_n)$ works in this case as explained in previous sections and for $D''(u_n)$, we do the following empirical test. We collect samples for each of the techniques RW, PR and RWJ. Intervals are taken of duration $5, 10, 15$ and $20$ time samples. The ratio  of number of upcrossings to number of exceedances $r_{\textrm{up}}$ and ratio of number consecuitve exceedances to number of exceedances $r_{\textrm{cluster}}$ are calculated in Table \ref{tab:check_d2}. These proportions are averaged over $2000$ occurrences of each of these intervals and over all the different intervals. The statistics in the table indicates strong occurrence of  condition $D''(u_n)$. We have also observed that the changes in the parameters does not affect this inference. 
\begin{table}[!htb]
\begin{center}
\begin{tabular}[t]{|c|c|c|c|}\hline
 & $r_{\textrm{up}(\%)}$ & $r_{\textrm{cluster}(\%)}$\\ \hline
RW  & $4$ & $89$ \\ \hline
PR  & $7$ & $91$ \\ \hline
RWJ  & $5$ & $86$ \\ \hline
\end{tabular}
\caption{Test of Condition $D''$ in the synthetic graph}
\label{tab:check_d2}
\end{center}
%\vspace{-1 cm}
\end{table}

\subsection{Real network}
We consider two real world networks: Enron email network and DBLP network. The data is collected from \cite{snap}. Both the networks satisfy the check for the condition $D''(u_n)$ reasonably well. 

For the RW sampling, Figure \ref{fig:cop_estr_enron_dblp} shows the bivariate copula estimated and mentions corresponding EI. Intervals estimator is presented in Figure \ref{fig:ie_estr_enron_dblp}. After observing plateaus in the plots, we took EI as $0.25$ and $0.2$ for DBLP and Enron email graphs, respectively.

\begin{figure}[!htb]
\centering
\includegraphics[trim = 16mm 0 19mm 5mm, clip=true, scale=0.23]{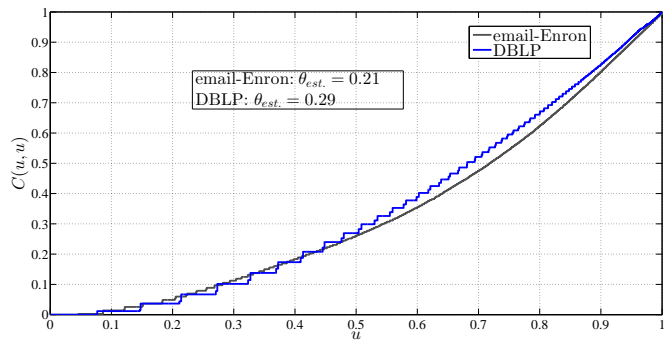}
\caption{Empirical copulas for email-Enron graph and DBLP graph (RW sampling)}
\label{fig:cop_estr_enron_dblp}
\end{figure}

\begin{figure}[!htb]
\centering
\includegraphics[trim = 16mm 0 19mm 5mm, clip=true, scale=0.23]{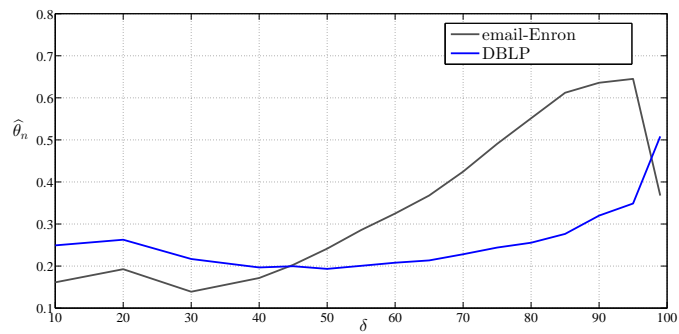}
\caption{email-Enron graph and DBLP graph (RW sampling): Intervals estimate vs the percentage of quantile level $\delta$.}
\label{fig:ie_estr_enron_dblp}
\end{figure}

In case of RWJ sampling, Figures \ref{fig:ie_estr_enron_RWJ} and \ref{fig:ie_estr_dblp_RWJ} present Intervals estimator for email-Enron and DBLP graphs respectively. Intervals estimate for PR sampling can be found in Figures \ref{fig:ie_estr_enron_PR} and \ref{fig:ie_estr_dblp_PR}. 

\begin{figure}[!htb]
\centering
\includegraphics[trim = 16mm 0 19mm 5mm, clip=true, scale=0.23]{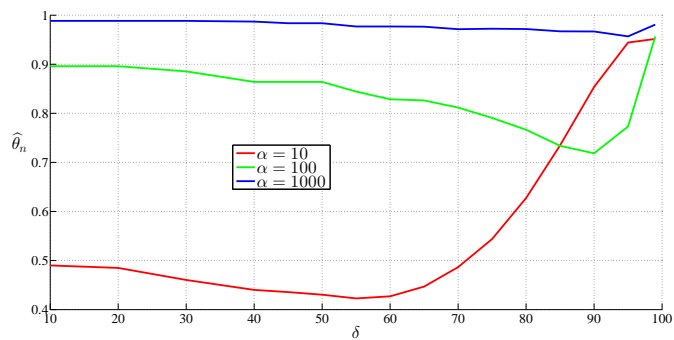}
\caption{email-Enron graph (RWJ sampling): Intervals estimate vs the percentage of quantile level $\delta$.}
\label{fig:ie_estr_enron_RWJ}
\end{figure}

\begin{figure}[!htb]
\centering
\includegraphics[trim = 16mm 0 19mm 5mm, clip=true, scale=0.23]{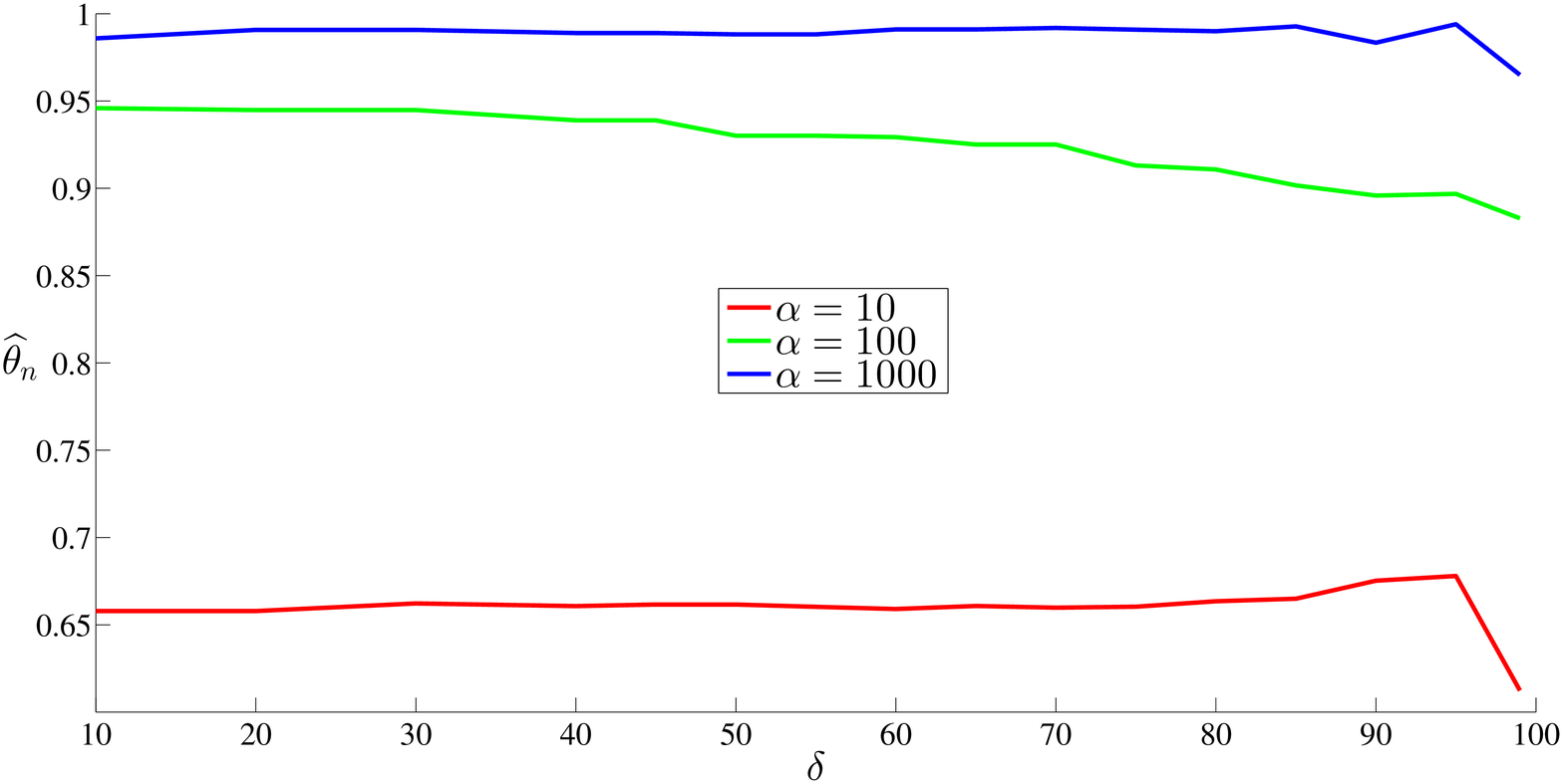}
\caption{DBLP graph (RWJ sampling): Intervals estimate vs the percentage of quantile level $\delta$.}
\label{fig:ie_estr_dblp_RWJ}
\end{figure}

\begin{figure}[!htb]
\centering
\includegraphics[trim = 16mm 0 19mm 5mm, clip=true, scale=0.23]{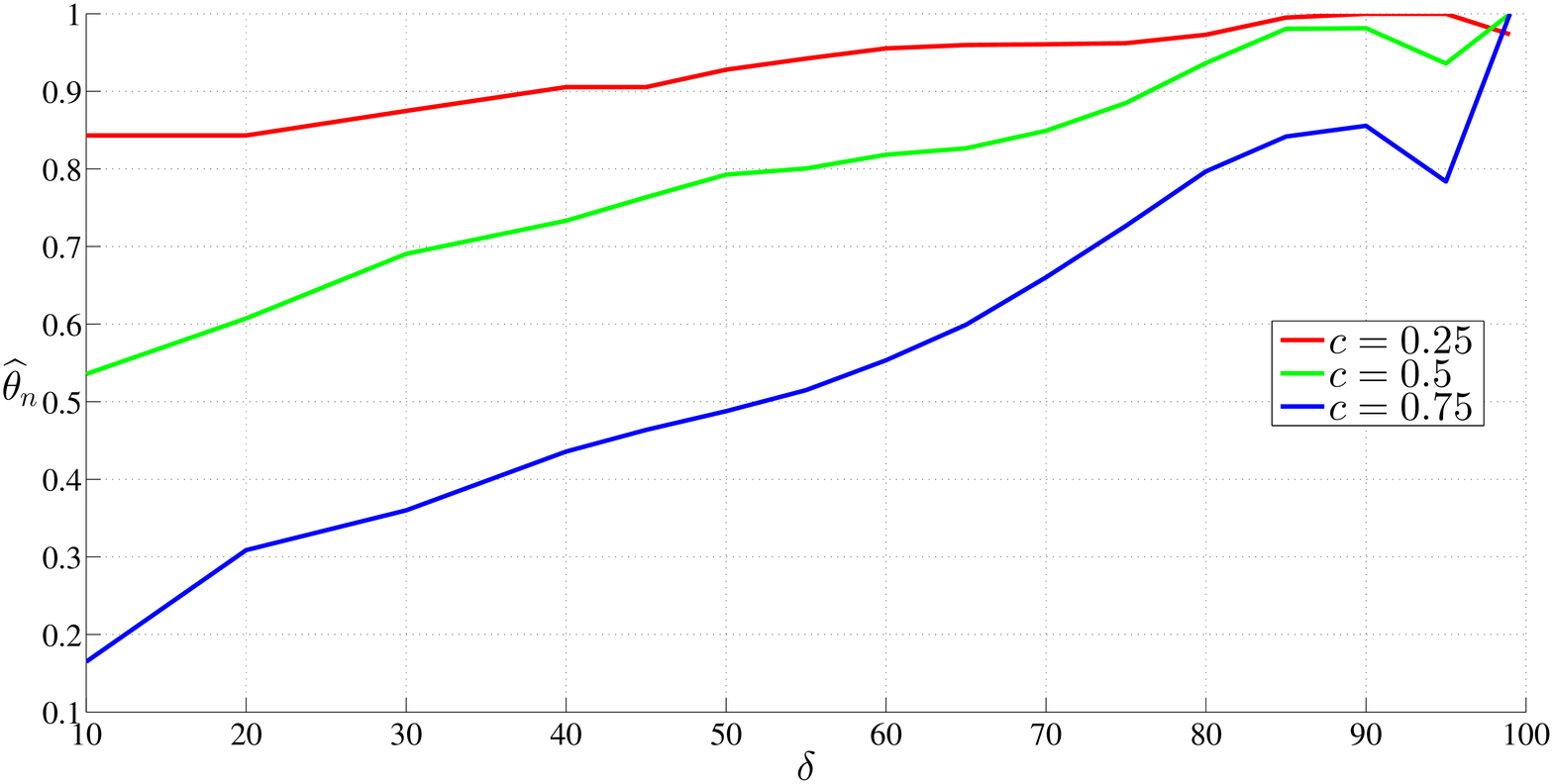}
\caption{email-Enron graph (PR sampling): Intervals estimate vs the percentage of quantile level $\delta$.}
\label{fig:ie_estr_enron_PR}
\end{figure}

\begin{figure}[!htb]
\centering
\includegraphics[trim = 16mm 0 19mm 5mm, clip=true, scale=0.23]{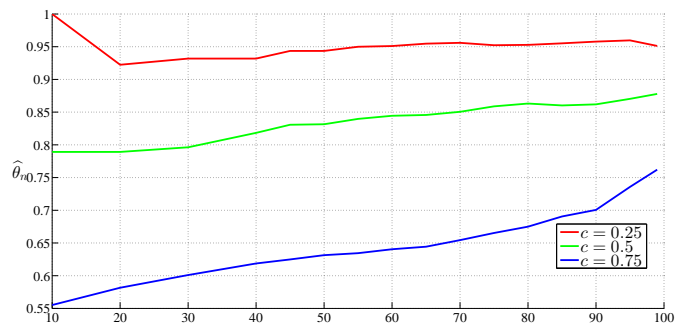}
\caption{DBLP graph (PR sampling): Intervals estimate vs the percentage of quantile level $\delta$.}
\label{fig:ie_estr_dblp_PR}
\end{figure}

\section{Conclusions}
\label{sec:conclu}
In this work, we have associated Extreme Value Theory of stationary sequences to sampling of large networks. We show that for any general stationary samples (function of node samples) meeting two mixing conditions, the knowledge of bivariate distribution or bivariate copula is sufficient to derive many of its extremal properties. The parameter extremal index (EI) encapsulates this relation. We relate EI to many relevant extremes in networks like order statistics, first hitting time, mean cluster size etc. In particular, we model correlation in degrees of adjacent nodes and examine samples from random walks on degree state space. Finally we have obtained estimates of EI for a synthetic graph with degree correlations and find a good match with the theory.  We also calculate EI for two real-world networks. In future, we plan to investigate the relation between assortativity coefficient and EI, and intends to study in detail the EI in real networks.

\section*{Endnotes}
\vspace{0.25cm}
\footnotesize
$^{(\textrm{a})}$ $F^{k}(.)$ indicates $k$th power of $F(.)$ throughout the paper except when $k=-1$ where it denotes the inverse function.\\

\noindent  $^{(\textrm{b})}$ $\sim$' stands for asymptotically equal, i.e. $f(x)\sim g(x)\Leftrightarrow f(x)/g(x)\rightarrow 1$ as $x\rightarrow a$, $x\in M$ where the functions $f(x)$ and $g(x)$ are defined on some set $M$ and $a$ is a limit point of $M$. 

$f(x)=o(g(x))$ means $\lim_{x\to a}f(x)/g(x)=0$.  Also $f(x)=O(g(x))$ indicates that there exist $\delta>0$ and $M>0$ such that $|f(x)| \le M |g(x)|$  for $|x - a| < \delta.$

\bibliographystyle{abbrv}
\bibliography{reference}

%%%%%%%%%%%%%%%%%%%%%%%%%%%%%%%%%%%
%%                               %%
%% Tables                        %%
%%                               %%
%%%%%%%%%%%%%%%%%%%%%%%%%%%%%%%%%%%

\end{document}